%% file: FraudFox.tex
\documentclass[runningheads]{llncs}
\usepackage{hyperref}
\usepackage{url}

\usepackage{xr}
\usepackage{algorithmic}
\usepackage{algorithm}
\usepackage{appendix}
\usepackage{epsfig}
\usepackage{epstopdf}
\usepackage{graphics}
\usepackage{graphicx}
\usepackage{lscape}
\usepackage{soul}
\usepackage{verbatim}
\usepackage{subcaption}

\usepackage{diagbox} 

\usepackage{amsfonts}
\usepackage{amssymb}
\usepackage{bbding}
\usepackage{bm}
\usepackage{booktabs} 
\usepackage{color}
\usepackage[section]{placeins}
\usepackage{graphicx}
\usepackage{natbib}
\usepackage{paralist}
\usepackage{xspace}

\input{dfn}

\begin{document}

\title{\method: Adaptable Fraud Detection in the Real World}

\hide{
\author{Matthew Butler}
\affiliation{%
  \institution{Amazon.com, Inc.}
  \streetaddress{410 Terry Ave N}
  \city{Seattle}
  \country{U.S.A.}}
\email{matbutle@amazon.com}

\author{Yi Fan}
\affiliation{%
  \institution{Amazon.com, Inc.}
  \streetaddress{410 Terry Ave N}
  \city{Seattle}
  \country{U.S.A.}}
\email{fnyi@amazon.com}

\author{Christos Faloutsos}
\affiliation{%
  \institution{Carnegie Mellon University}
  \streetaddress{5000 Forbes Avenue}
  \city{Pittsburgh}
  \country{U.S.A.}}
\email{faloutso@amazon.com}
}

\author{Matthew Butler\inst{1} \and
Yi Fan\inst{1} \and
Christos Faloutsos\inst{1,2}}
\authorrunning{M. Butler et al.}
%
\institute{Amazon.com, Inc., 410 Terry Ave N, Seattle WA 98109, U.S.A. 
\email{\{matbutle,fnyi\}@amazon.com}\\ \and
Carnegie Mellon University, 5000 Forbes Avenue, Pittsburgh, U.S.A.\\
\email{\{faloutso\}@amazon.com}}
\maketitle 

\begin{abstract}
\input{000abstract}

\end{abstract}

\hide{
\begin{CCSXML}
<ccs2012>
   <concept>
       <concept_id>10010147.10010257.10010321.10010333</concept_id>
       <concept_desc>Computing methodologies~Ensemble methods</concept_desc>
       <concept_significance>500</concept_significance>
       </concept>
   <concept>
       <concept_id>10010147.10010341.10010342</concept_id>
       <concept_desc>Computing methodologies~Model development and analysis</concept_desc>
       <concept_significance>300</concept_significance>
       </concept>
   <concept>
       <concept_id>10010405.10003550.10003555</concept_id>
       <concept_desc>Applied computing~Online shopping</concept_desc>
       <concept_significance>500</concept_significance>
       </concept>
   <concept>
       <concept_id>10010147.10010257.10010258.10010259</concept_id>
       <concept_desc>Computing methodologies~Supervised learning</concept_desc>
       <concept_significance>500</concept_significance>
       </concept>
 </ccs2012>
\end{CCSXML}
}
\hide{
\ccsdesc[500]{Computing methodologies~Ensemble methods}
\ccsdesc[300]{Computing methodologies~Model development and analysis}
\ccsdesc[500]{Applied computing~Online shopping}
\ccsdesc[500]{Computing methodologies~Supervised learning}
}

\keywords{fraud detection \and kalman filters \and adversarial learning \and ensemble modeling}


\section{Introduction}
\label{sec:introduction}
\input{010introduction}

\section{Literature Survey}
\label{sec:survey}

\input{020survey}

\section{Overview of proposed solution}
\label{sec:methods}

\input{030methods_approach}

\section{Proposed \methodW, for \oracleWeights}
\label{sec:EKF}
\input{031EKF_dynamicLinearModel}

\section{Proposed \methodI for \antiGaming}
\label{sec:antiGame}
\input{antiGame}

\section{Proposed \methodD for \decisionSurface}
\label{sec:FP_DS}
\input{041decisionSurfaceHyperbolic}

\section{Proposed \methodB for \businessConstraints}
\label{sec:paretoOptimal}
\input{paretoOptimal}

\section{Results}
\label{sec:Results}

\input{060results}

\section{Conclusions}
\label{sec:conclusion}
\input{080conclusion}

\paragraph*{Acknowlegements}
For her valuable contributions to the project we would like to thank Mina Loghavi.

\newpage

\bibliographystyle{plainnat}
\bibliography{BIB/citation}

\appendix

\input{091appendix_proofs}

\label{app:expForgetProof}

\input{increaseCovMatrix}

\label{app:increaseCovMatrix}

\input{002appendix_proof_decisionSurface}

\label{app:proofOptimalDecSur}

\end{document}

%% file: dfn.tex
\newtheorem{myProblem}{Problem}

\newtheorem{sfproblems*}{Semi-formal Problem}

\newtheorem{myLemma}{Lemma}

\newtheorem{myProof}{Proof}

\newcommand{\hide}[1]{}

\newcommand{\bit}{\begin{compactitem}}
\newcommand{\eit}{\end{compactitem}}
\newcommand{\ben}{\begin{compactenum}}
\newcommand{\een}{\end{compactenum}}
\newcommand{\beq}{\begin{equation}}
\newcommand{\eeq}{\end{equation}}

\newcommand{\nClassifiers}{k}
\newcommand{\ndim}{d}

\newcommand{\myvec}[1]{\vec{#1}}
\newcommand{\vecw}{\myvec{w}} 

\newcommand{\dsf}{F} 
\newcommand{\val}{v} 
\newcommand{\score}{s} 
\newcommand{\shMax}{\score_{H,max}} 
\newcommand{\invCost}{c} 
\newcommand{\pmargin}{m} 
\newcommand{\friction}{f} 
\newcommand{\hprof}{p_h} 
\newcommand{\fprof}{p_f} 

\newcommand{\oracleWeights}{{Fraud Adaptation}\xspace}
\newcommand{\decisionSurface}{{Decision Surface}\xspace}
\newcommand{\antiGaming}{{Anti-Gaming}\xspace}
\newcommand{\businessConstraints}{{Business Adaptation}\xspace}
\newcommand{\falseAlarm}{'false alarm'\xspace}

\newcommand{\principled}{{Principled\xspace}}
\newcommand{\scalable}{{Scalable\xspace}}
\newcommand{\effective}{{Effective/Deployed\xspace}}
\newcommand{\ensemble}{{Uses ensemble\xspace}}
\newcommand{\adversarial}{{Auto-adapts to adversaries\xspace}}
\newcommand{\nonstationary}{{Auto-adapts to non-stationary\xspace}}
\newcommand{\businessReq}{{Auto-adapts to business rec's\xspace}}
\newcommand{\myemph}[1]{{\bf #1}}

\newcommand{\automatic}{{Automatic\xspace}}
\newcommand{\method}{{\em FraudFox}\xspace}

\newcommand{\methodF}{{\method-{F}}\xspace}
\newcommand{\methodB}{{\method-{B}}\xspace}
\newcommand{\methodD}{{\method-{D}}\xspace}
\newcommand{\methodI}{{\method-{I}}\xspace}
\newcommand{\methodW}{{\methodF}}

%% file: 000abstract.tex
The proposed method (FraudFox) provides solutions to adversarial attacks in a resource constrained environment. We focus on questions like the following:
\newline
How suspicious is `Smith', trying to buy \$500 shoes, on Monday 3am?
How to merge the risk scores, from a handful of risk-assessment
modules (`oracles') in an adversarial environment?
More importantly, given historical data (orders, prices,
and what-happened afterwards),
and business goals/restrictions,
which transactions, like the `Smith' transaction above,
which ones should we `pass', versus send 
to human investigators?
The business restrictions could be: `at most $x$ investigations are feasible',
or `at most \$$y$ lost due to fraud'.
These are the two research problems we focus on,
in this work.
\newline
One approach to address the first problem (`oracle-weighting'),
is by using Extended Kalman Filters with dynamic importance weights, 
to
automatically and continuously update our weights
for each 'oracle'.
For the second problem, we show how to derive an optimal decision surface, 
and how to compute
the Pareto optimal set, to allow what-if questions.
An important consideration is adaptation:
Fraudsters will change their behavior, according to our
past decisions; thus, we need to adapt accordingly.
\newline
The resulting system, \method, is 
scalable, adaptable to changing fraudster behavior,
effective, and already in \textbf{production} at Amazon.
FraudFox augments a fraud prevention sub-system and has led to significant performance gains.

%% file: 010introduction.tex
Transactional fraud management is essential for ecommerce.
The resulting 'arms race' is the first issue 
that the proposed \method tackles.
The second issue, is non-adversarial changes:
the business rules/constraints do change over time,
e.g., next month we may want to increase (or reduce) the number of investigations.
The proposed \method handles this issue, too,
by carefully pre-computing a set of 'good' solutions.

\begin{figure}[htb]
	\centering
	\includegraphics[width=1.00\columnwidth]{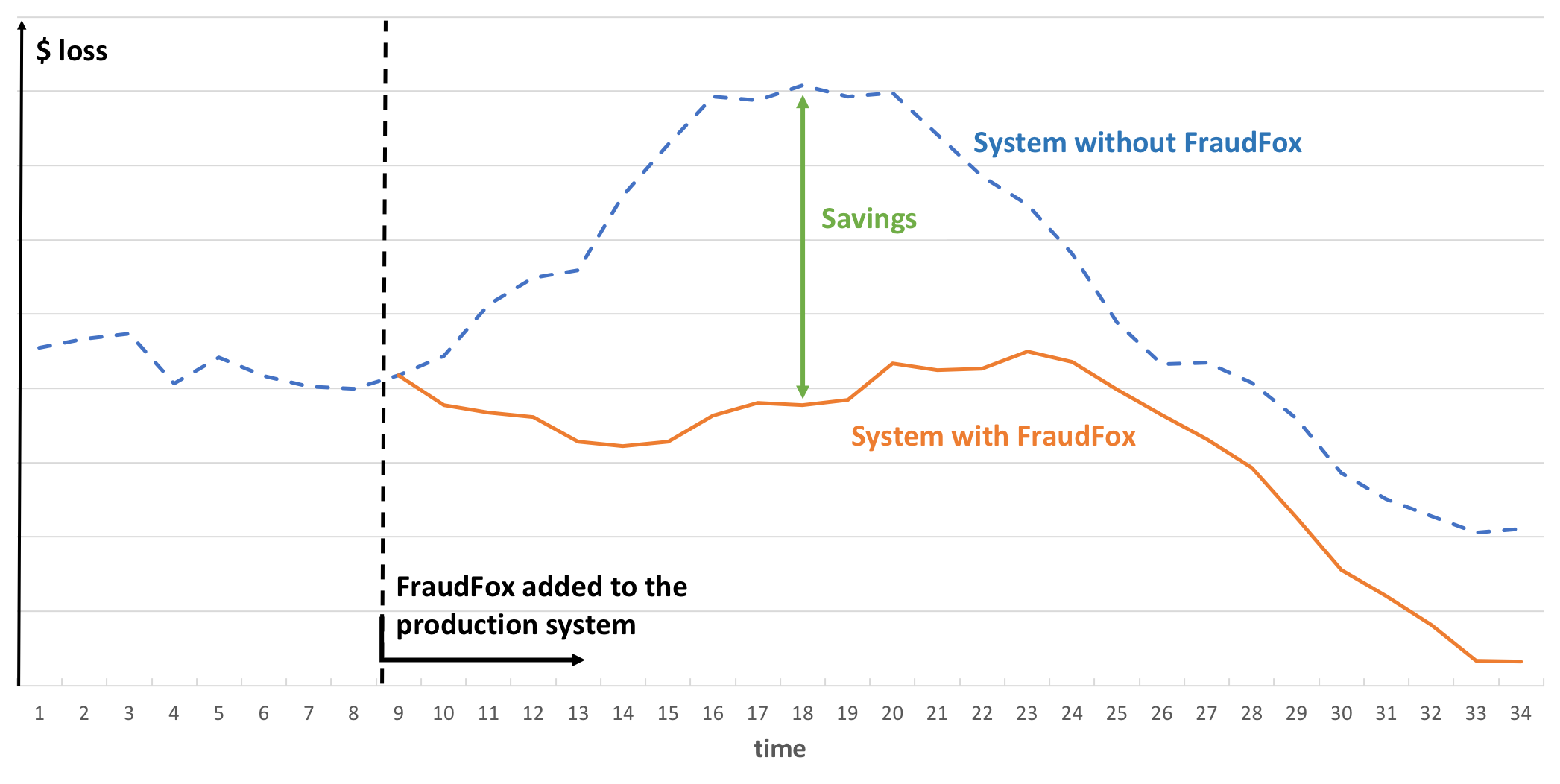}
	\caption[time-evolution]{{\bf Business Impact of \method}: 
		Visible loss reduction when applying it to real word data. 
		Losses vs. time - 1st without dynamically blending a subset of fraud indicators (blue), i.e., without \method; 2nd with dynamically blending (orange), i.e. with \method introduced at the dashed-black time tick.
		\label{fig:timeplot}}
\end{figure}

\par \noindent
Next we  address the specifications of the ADS track of the venue:

\begin{compactdesc}
	\item{\bf Deployed}: The system is operational (see Figure~\ref{fig:timeplot})
\item {\bf Real world challenge:}
    Spotting fraudsters is a major problem in e-retailers.
    The specific challenges we focus on, are the following two:
    (a) 	how to detect fraudsters that are behaving adversarially, i.e. that are carefully trying
        to avoid detection.
    (b) how to handle changing business constraints
        (like number of available human investigators).
\item {\bf Methodology - novelty}: For the first challenge, we propose an
   extended Kalman filter, with exponential decay 
   (see Theorem \ref{thm:EKFdecay} 
   on page \pageref{thm:EKFdecay}), so that
   it adapts to changing fraudster tactics.
   For the second challenge, we propose a decision surface 
   (see Lemma~\ref{lemma:hyperbola} on page~\pageref{lemma:hyperbola})
   and state-of-the-art optimization 
   with PSO (particle swarm optimization),
   to estimate the Pareto front
   (see section~\ref{sec:paretoOptimal}).
\item {\bf Business impact}:
   \method is in production,
   making a clear impact (see Figure~\ref{fig:timeplot}).
\end{compactdesc}

\par \noindent
We elaborate on the main problem and its subproblems.

At the high level, we want to make pass/investigate decisions automatically,
for each order like the 'Smith' order above, while
adapting to both
\bit
	\item the (adversarially) changing behavior 
	      of fraudsters and
	\item the (non-adversarially)
	      changing business constraints 
	      (eg., `$x$ more investigators are available, 
	      for the next $y$ months')
	\eit
\newpage
\par \noindent
A semi-formal problem definition is as follows:
\begin{sfproblems*}
	\label{ip:bp}
~ 
~
\begin{compactdesc}
\item [Given]
~
      \begin{itemize}
	       \item $\nClassifiers$ risk-assessing oracles
           \item historical data ( tuples of the form: [order-risk, order-price, importance-weight, fraud-flag] )
      \end{itemize}
\item [Find] 
~
      \begin{itemize}
      	    \item the oracle-weights $\vecw = (w_1, \ldots , w_\nClassifiers )$
      \end{itemize}
\item [To meet] 
~
      \begin{itemize}
      	\item the immediate  business goal/constraint, and 
      	\item to be prepared for unpredictable (or unplanned) business constraints in the near future.
      \end{itemize}
\end{compactdesc}
\end{sfproblems*}

As we describe later, we break the problems into four sub-problems, and we describe how we solve each. Figure~\ref{fig:timeplot} illustrates \method savings (green vertical arrow) post deployment (vertical black dashed line). The blue dashed line indicates fraud losses without dynamically blending a subset of fraud signals, while the orange solid line represents losses when performing this dynamic blending. Lower is better, and \method consistently achieves visible loss reduction. 

In short, our contributions are the following:
\begin{compactitem}
	\item \myemph{\automatic}: \method adapts to new elements of ground truth, without human intervention
	(see Theorem~\ref{thm:EKFdecay})
	\item \myemph{\principled}: 
	we derive from first principles an optimal decision surface (see Lemma~\ref{lemma:hyperbola})
	\item \myemph{\scalable}: \method is linear in the input size, and quadratic on
	the (small) number of oracles. It takes minutes to train on 
	a stock machine, and fractions of a second to make decisions
	\item \myemph{\effective}: Already in production, 
	\method has visible benefits (see Figure~\ref{fig:timeplot})
	
\end{compactitem}

The paper is organized in the usual way: survey, proposed solutions,
experiments, and conclusions.

%% file: 020survey.tex
\paragraph{Kalman filters}

To the best of our knowledge, there are few examples of Kalman Filters used in adversarial modeling. Notable work includes Park et. al~\cite{park2013intermittent} whom consider a control theory setting that involved a separated observer of a system and its controller. In this setting the adversary has the ability to non-randomly erase information transmitted by the observer to the controller. Their approach to counteracting the negative impacts of the adversary is based on non-uniform sampling of the observed system. This is not unlike our approach, where we non-uniformly weight the observation data based on a noisy signal from the adversary. Another example comes from Chang et al.~\cite{chang2018secure} where an ensemble method is proposed. The adversarial nature of the system is modeled using a ``secure estimator'' that accounts for the noise being introduced by the adversary not being zero mean i.i.d. Gaussian process noise. This secure estimator is then integrated into the canonical update equations of the KF. 

Additionally, our proposal overlaps with a few other areas of machine learning which will be discussed in the rest of the section. The canonical Kalman Filter \cite{kalman1960new} is a dynamic signal processor that was initially developed in the 60's. It has long been widely applied to signal processing, autonomous driving, target tracking and so on \cite{azuma1994SIGGRAPH}. Later, Extended Kalman Filter (\cite{sorenson1985IEEE},\cite{jazwinski2007KF},~\cite{niranjan1999sequential},~\cite{meinhold1983understanding}) and Unscented Kalman Filter \cite{julier2004IEEE} were proposed to solve a broader set of non-linear problems. For the Extended Kalman Filter, our work is similar to MacKay~\cite{mackay1992evidence} and Penny et al.~\cite{penny1999dynamic} where we make use of an approximation of the posterior from the former and also propose a method to deal with non-stationarity as in the later. However, their proposals do not consider applying the Kalman Filter to adversarial problems. 

\paragraph{Online learning}
The other key component of our proposal is online learning. Different online learning techniques have been proposed for linear (\cite{zhang2004ACM}, \cite{chen1994SIGMOD}) and non-linear (\cite{lakshminarayanan2014NIPS}, \cite{grnarova2018ICLR}) models, which are primarily designed to accelerate computation time and resolve scalability limitations. In the presence of adversarial attacks, online learning becomes more powerful due to its advantages in incorporating emerging data patterns in a more efficient and effective fashion, such as~\cite{quanrud2015NIPS} and the adversarial classification method from~\cite{dalvi2004KDD} (referred to as AC-method in table~\ref{tab:salesman}). However, no universal solution to these types of problems has been established due to the nature of adversaries.

\paragraph{Ensemble of classifiers}
Ensemble learning is a popular technique used to overcome individual shortcomings of one model or another. Most popular uses of ensembles come from boosting~\cite{freund1996experiments} and bootstrap-aggregation~\cite{breiman2001random} methods, as well as in a host of application studies~\cite{krawczyk2017ensemble}. Research in ensemble learning is still evolving, with some recent work focused on information extraction~\cite{ye2018rapidscorer},~\cite{lucchese2017quickscorer} or interpretability~\cite{tolomei2017interpretable} of these ensembles. While other work focuses on classification improvement~\cite{narassiguin2017dynamic} and optimal policies for combining classifiers~\cite{cerqueira2018constructive}. This includes specific applications, such as protein function prediction~\cite{yu2012transductive}, profit estimation~\cite{an2016map},\cite{xu2017ensemble} and time-series forecasting~\cite{saadallahdrift}\cite{cerqueira2017arbitrated}. Examples, of the use of Kalman Filters for ensemble learning include~\cite{peel2008data} and~\cite{baraldi2012kalman}, however neither application concerns adversarial or non-stationary environments. An example of ensemble learning in non-stationary environments includes~\cite{muhlbaier2007ensemble} where a simple weighted-majority vote is used to combine signals. However, the weights themselves are not learned in a principled way but are simply arbitrary weighted averages of the individual classifier performance over some time.

However, none of the above methods provides a full solution
to Problem~\ref{ip:bp} in the introduction.
The qualitative comparison is in Table~\ref{tab:salesman}
\begin{table}[htbp]
	\begin{center}
		\begin{tabular}{ l| c | c|c||c|}
			Property    & \rotatebox{80}{$L^{++}$~\cite{muhlbaier2007ensemble}} & \rotatebox{80}{AC~\cite{dalvi2004KDD}} & \rotatebox{80}{DLM~\cite{penny1999dynamic}} & \rotatebox{80}{\method} \\
			\hline  
			\adversarial & \hide{\XSolidBrush}& \CheckmarkBold & \hide{\XSolidBrush}&\CheckmarkBold \\
			\nonstationary &\CheckmarkBold & \hide{\XSolidBrush}& \CheckmarkBold&\CheckmarkBold \\
			\businessReq & \hide{\XSolidBrush}&\hide{\XSolidBrush} &\hide{\XSolidBrush} &\CheckmarkBold \\
			\effective    & \hide{\CheckmarkBold} & \hide{\CheckmarkBold} & \hide{\CheckmarkBold} &\CheckmarkBold \\
			\ensemble & \CheckmarkBold & \hide{\XSolidBrush} & \CheckmarkBold &\CheckmarkBold \\
			\hline \\
		\end{tabular}
		\caption{ \underline{\bf \method matches all specs}, while competitors
			miss one or more of the features.\label{tab:salesman}}
	\end{center}
\end{table}

%% file: 030methods_approach.tex
 We propose to break our high-level problems, into the following four sub-problems.
 The first two are for the adaptation to fraudsters
 (with adversarial behavior);
 the last two are for the adaptation to changing
 of business constraints.
 
 \begin{enumerate}
 	\item P\ref{prob:kalman} - \oracleWeights: find how to automatically update the weights of the $\nClassifiers$ classifiers, 
 	given new ground truth (see Problem~\ref{prob:kalman} page~\pageref{prob:kalman}).
 	Notice that we can give importance weights to each labeled order,
 	according to its  recency, or other criteria (see
 	page~\pageref{prob:antiGaming})
 	\item P\ref{prob:antiGaming} - \antiGaming: 
 	given the non-stationary and adversarial environment, determine the importance weights for new ground truth orders
 	(see Problem~\ref{prob:kalman} page~\pageref{prob:kalman})
 	\item P\ref{prob:surface} - \decisionSurface: find an optimal decision surface for passing or investigating orders, given
 	the various costs and benefits of making a decision (see page~\pageref{prob:surface})
 	\item P\ref{prob:businessConstraints} - \businessConstraints: 
 	Given the business constraints and need for quick adaption to constraint changes, find the best set (i.e., Pareto optimal) 
 	of classifier weights from Problem~\ref{prob:kalman}. 
 	See Problem~ \ref{prob:businessConstraints}, page~\pageref{prob:businessConstraints}.
 \end{enumerate}

Next, we show how to solve  the  four sub-problems.

%% file: 031EKF_dynamicLinearModel.tex
Here we focus on the sub-problem \oracleWeights.
Table~\ref{tab:symbols} 
lists the symbols and their definitions.
The exact definition is as follows

\begin{table}[htbp]
	\centering
	\begin{tabular}{|c|l|}
		\hline 
		Symbols & Definitions \\ \hline \hline
		$\nClassifiers$&  count of Oracles\\ 
		\hline 
		$\vecw_t$& weights of the Oracles (k x 1 vector) at time $t$ \\ 
		\hline 
		$\mathbf{\Sigma}_t$& covariance matrix of the Oracles (k x k matrix) \\ 
		\hline 
		$\mathbf{K}_t$& Kalman Gain \\ 
		\hline 
		$\mathbf{H}_t$& Observation matrix (Eq~\ref{eq:expSig})\\ 
		\hline 
		$\beta$& hyper-parameter of weighting policy (Eq~\ref{eq:border})\\ 
		\hline
		$\alpha$& hyper-parameter of weighting policy (Eq~\ref{eq:border}) \\
		\hline 
		$D_i$ & order $i$'s distance to decision surface (eq~\ref{eq:border})\\ 
		\hline 
		$z$ & binary class label \\ 
		\hline 
		$y$ & model output \\ 
		\hline 
		$s_t^2$ & variance of the activation function\\ 
		\hline 
	\end{tabular} 
\caption{Symbols and definitions \label{tab:symbols}}
\end{table}

\begin{myProblem}[Fraud Adaptation]
	~
	\label{prob:kalman}
	\begin{compactdesc}
		\item \textbf{Given}: 			a new order {\bf with} class label
		\item 	\textbf{Update}: 	the weight of each classifier ($w_1$, ... $w_\nClassifiers$)
		\item \textbf{to optimize}: classification accuracy
	\end{compactdesc}
\end{myProblem}

Our approach is based on taking a Bayesian perspective of canonical Logistic Regression of the form:
\begin{equation} \label{logisticRegression}
Y_t = P(\hat{Z_t} =1|\textbf{w}) = g(\textbf{w}^T\textbf{x}_t),
~~~\mathrm{where} 
~g(a_t) = \frac{exp(a_t)}{1 + exp(a_t)}
\end{equation}
where the predicted class label $Z_t$ is generated using the relationship defined in equation~\ref{logisticRegression} and $g(a_t)$ is the logistic function
and $a_t$ is the activation at time $t$. The activation being the linear combination of the model inputs ($\textbf{x}$) and the weights ($\textbf{w}$) of the model.

With the above setting, and with a new element of ground truth coming in
(an order with vector $\textbf{x}_t$, and fraud/honest flag $y_t$),
 the update equations for the weights of the 'oracles' $\hat{\textbf{w}} = [ w_1, ... , w_k]$ are given by
 the Extended Kalman Filter (EKF) equations:
 \begin{myLemma}[EKF update]
 	\begin{eqnarray}
 	\label{eq:mu}
 	\mathbf{\Sigma_t} &= & \mathbf{\Sigma}_{t-1} - \frac{y_t(1-y_t)}{1+y_t(1-y_t)s^2_t} (\mathbf{\Sigma}_{t-1}\textbf{x}_{t-1})(\mathbf{\Sigma}_{t-1}\textbf{x}_{t-1})^T \\
 	\label{eq:sigma}
 	\hat{\textbf{w}}_t &=  & \textbf{w}_{t-1} + \frac{\mathbf{\Sigma_t}}{1+y_t(1-y_t)s_t^2}\textbf{x}_t(z_t-y_t)
 	\end{eqnarray}
 \end{myLemma}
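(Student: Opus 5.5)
The plan is the standard Bayesian derivation of the EKF for logistic regression, in the spirit of MacKay~\cite{mackay1992evidence} and Penny et al.~\cite{penny1999dynamic}. Treat the weights as a state with random-walk (here: identity) dynamics, and at time $t$ take the prior $\textbf{w}\sim\mathcal{N}(\textbf{w}_{t-1},\mathbf{\Sigma}_{t-1})$. The observation is the binary label $z_t\sim\mathrm{Bernoulli}(g(\textbf{w}^T\textbf{x}_t))$.

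The first step is to linearise the observation model around the current mean. With $a_t=\textbf{w}_{t-1}^T\textbf{x}_t$ and $y_t=g(a_t)$, the logistic derivative $g'(a)=g(a)(1-g(a))$ gives
\begin{equation*}
g(\textbf{w}^T\textbf{x}_t)\approx y_t + y_t(1-y_t)\,\textbf{x}_t^T(\textbf{w}-\textbf{w}_{t-1}).
\end{equation*}
So the observation matrix is $\mathbf{H}_t=y_t(1-y_t)\textbf{x}_t^T$. The observation noise is the Bernoulli variance, $R_t=y_t(1-y_t)$. The prior variance of the activation is $s_t^2=\textbf{x}_t^T\mathbf{\Sigma}_{t-1}\textbf{x}_t$.

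The second step is to apply the standard linear-Gaussian Kalman update:
\begin{equation*}
\mathbf{K}_t=\mathbf{\Sigma}_{t-1}\mathbf{H}_t^T(\mathbf{H}_t\mathbf{\Sigma}_{t-1}\mathbf{H}_t^T+R_t)^{-1},\qquad \mathbf{\Sigma}_t=\mathbf{\Sigma}_{t-1}-\mathbf{K}_t\mathbf{H}_t\mathbf{\Sigma}_{t-1}.
\end{equation*}
Substituting gives the scalar innovation variance $y_t^2(1-y_t)^2s_t^2+y_t(1-y_t)=y_t(1-y_t)\,(1+y_t(1-y_t)s_t^2)$. One factor of $y_t(1-y_t)$ cancels against the $\mathbf{H}_t^T$ in the numerator. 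The covariance correction therefore becomes $\frac{y_t(1-y_t)}{1+y_t(1-y_t)s_t^2}(\mathbf{\Sigma}_{t-1}\textbf{x}_t)(\mathbf{\Sigma}_{t-1}\textbf{x}_t)^T$, which is the covariance update in the lemma. I read the subscript $t-1$ on $\textbf{x}$ there as a typo for $t$.

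The third step handles the mean. The Kalman mean update is $\textbf{w}_t=\textbf{w}_{t-1}+\mathbf{K}_t(z_t-y_t)$, with $\mathbf{K}_t=\mathbf{\Sigma}_{t-1}\textbf{x}_t/(1+y_t(1-y_t)s_t^2)$. To reach the stated form with $\mathbf{\Sigma}_t$ in place of $\mathbf{\Sigma}_{t-1}$, I would use the information-form identity $\mathbf{\Sigma}_t^{-1}=\mathbf{\Sigma}_{t-1}^{-1}+y_t(1-y_t)\textbf{x}_t\textbf{x}_t^T$ (Sherman--Morrison). This gives $\mathbf{\Sigma}_t\textbf{x}_t=\mathbf{\Sigma}_{t-1}\textbf{x}_t/(1+y_t(1-y_t)s_t^2)$, so $\mathbf{K}_t=\mathbf{\Sigma}_t\textbf{x}_t$. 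Equivalently, one Newton/Laplace step on the log-posterior, whose gradient is $\textbf{x}_t(z_t-y_t)$ and whose Hessian gives exactly that precision, yields $\textbf{w}_t=\textbf{w}_{t-1}+\mathbf{\Sigma}_t\textbf{x}_t(z_t-y_t)$.

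The main obstacle is reconciling this with the extra factor $1/(1+y_t(1-y_t)s_t^2)$ in the paper's weight update. Either that factor is a moderated or double-counted normalisation, or $\mathbf{\Sigma}_t$ there denotes the prior covariance. I would first check the MacKay-style moderated-output variant, where $s_t^2$ enters through the probit approximation $y_t\approx g(a_t/\sqrt{1+\pi s_t^2/8})$. If that does not account for the factor, I would state the result with $\mathbf{\Sigma}_{t-1}$ in the mean update, which is the exact linearised derivation, and note that the two forms agree to first order in $y_t(1-y_t)s_t^2$.
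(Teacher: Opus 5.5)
Your derivation is correct and follows the paper's route. The paper obtains this lemma as the $a=1$ case of Theorem~\ref{thm:EKFdecay}, which is exactly the standard update $\mathbf{\Sigma}_t=\mathbf{\Sigma}_{t-1}-\mathbf{K}_t\mathbf{H}_t\mathbf{\Sigma}_{t-1}$, $\textbf{w}_t=\textbf{w}_{t-1}+\mathbf{K}_t(z_t-y_t)$, followed by the ``straightforward substitution'' that you actually carry out; you also make explicit $R_t=y_t(1-y_t)$ and $s_t^2=\textbf{x}_t^T\mathbf{\Sigma}_{t-1}\textbf{x}_t$, which the paper leaves implicit, and you are right that $\textbf{x}_{t-1}$ is a typo for $\textbf{x}_t$.

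The discrepancy you flag in the weight update is real, and your prior-covariance reading settles it, because in the paper's own derivation $\mathbf{K}_t$ is built from $\mathbf{\Sigma}^\ast_{t-1}\to\mathbf{\Sigma}_{t-1}$. So skip the moderated-output check, since moderation only changes the value of $y_t$ and cannot produce a squared denominator. Also drop the remark that the two forms agree to first order: the posterior-covariance version carries an extra factor $(1+c)^{-1}=1-c+O(c^2)$ with $c=y_t(1-y_t)s_t^2$, so the two agree only to zeroth order in $c$.
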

\begin{proof}
	Special case of the upcoming Theorem~\ref{thm:EKFdecay}, see Lemma~\ref{lem:noDecay}. \hfill $\blacksquare$
\end{proof}

%% file: antiGame.tex
Here we focus on the sub-problem of updating a classifier in a non-stationary and adversarial environment. The EKF facilities the incremental learning component, however equally important, is how much influence is awarded to new observations. The fraud environment is highly non-stationary and providing more weight to new observations is similar to gradually forgetting older ones. When receiving new fraud orders we know that all observations are not equally useful. Some of the orders are not actually fraud or the fraud pattern has already subsided. Given the large volume of new observations each day, it is impossible to manually make these decisions. 

Formally, the problem we propose and solve, is the following:

\begin{myProblem}[\antiGaming]
	\label{prob:antiGaming}
~

	~
\begin{compactdesc}
	\item \textbf{Given}:
	\begin{compactitem}
		\item Several labeled orders ($\ndim$-dim vectors) generated from a non-stationary and adversarial environment,
		\item and the decision surface (from problem~\ref{prob:surface}, see section~\ref{sec:paretoOptimal})
	\end{compactitem}
	\item \textbf{Find}: an appropriate weight for each order
	\item \textbf{To}: maximize classification accuracy on the recent time-period

\end{compactdesc}

\end{myProblem}

In this section we propose two approaches to assign importance weights. Firstly, we propose a novel weighting schema which considers fraudsters attempting to game the system (adversarial). Secondly, we derive novel update equations for the Kalman Filter that  accommodate non-stationary observations.

\subsection{Adversarial Adaptation}
    \label{sub:wp}

\input{032EKF_weightPolicy}

\subsection{Non-stationary Adaptation}
   \label{sub:exp}
   \input{033EKF_forgetting}

%% file: 032EKF_weightPolicy.tex
Building on the idea of an adversarial domain, it is observed that those looking to commit fraud will ``poke'' around a system looking for holes. In doing so, a typical fraud pattern can emerge that exists close to a $\textit{decision surface}$ of the fraud system. Figure~\ref{fig:gaming} illustrates the motivation behind our Eq.~\ref{eq:border}:
Fraudsters will tend to 'fly below the horizon', flocking below our decision surface. 
Our proposed approach will block them. A decision surface is essentially a rule which determines when an order is passed or investigated.  
We assume that the weight is a function of the distance from the decision surface, where observations closer to the decision surface receive more weight. An optimal weight is difficult to know a priori. Thus, we model this weight-to-distance relationship based on an exponential distribution. The weight ($\gamma$) of observation $i$ is given as:
\begin{equation}
\label{eq:border}
\gamma_i = \beta e ^{-\alpha D_i}
\end{equation}
where $\alpha$ and $\beta$ act as hyper-parameters of the model. A very practical aspect of the EKF is that it is parameter-less. However, this can lead to a lack of flexibility and so the weighting policy provides some opportunities to adapt the model. In a following section (\ref{sec:FP_DS}), we describe the creation of a decision surface from a cost-benefit analysis perspective, where standard approaches to calculate the distance a point, $i$, to a curve can be used for $D_i$.

\begin{figure}
	\centering
		\includegraphics[width=0.75\linewidth]{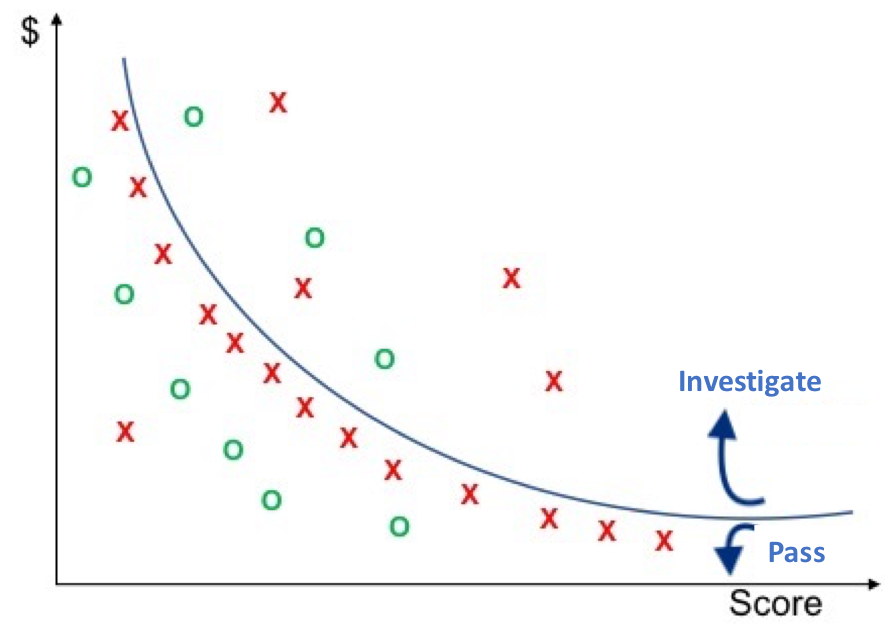}
	\caption[Gaming illustration]
	{ \underline{\bf Illustration of Decision Surface}
		and Adversarial Behavior.
		In the (fraud-score vs. order value) plot, 
		when we fix the decision surface (blue curve), 
		fraudulent orders (red crosses) will probably flock just below it.
		 \methodI is blocking exactly this behavior.
	\label{fig:gaming}}
\end{figure}

The values of ($\alpha$ \& $\beta$) are arbitrary, however, since training is very efficient we don't need to choose them a priori but can perform a simple grid search to generate a set of possible updates to the model.

%% file: 033EKF_forgetting.tex
In a non-stationary environment there is cause to gradually forget previously learned patterns. With online learning, this is achieved  with a forgetting function that diminishes model parameters overtime. With respect to \method, the ``idea'' of forgetting can be interpreted as an increase in the covariance matrix $\Sigma$, as a larger covariance entails more uncertainty in the model and thus less reliance of the past observations. When there is more uncertainty in the prior, the model updates will place more weight on the latest observations. An EKF model can be extended to non-stationary environments by simply including a term $q_t$ in the update equations, where $q_t\textbf{I}$ is isotropic covariance matrix describing the impact of the state noise on the current estimate of variance in the various priors. However, we see two problems with this approach: (1) It is difficult to accurately estimate the state-noise and (2) summarize it in $q_t$.

Rather than estimating the state noise, we approach the problem by estimating a weight (a function of time since the order was placed) to be applied to the current observation and thus the Kalman gain update. The full derivation of the equations below is included in the supplementary material. Formally, we consider a weight $a$ to be applied to the current Kalman Gain for purpose of a model update. Applying this weight requires the following:
\begin{theorem}[EKF with forgetting]
		\label{thm:EKFdecay}
With forgetting parameter of $a$, the update equations are:
\begin{equation}\label{eq:expMu}
\textbf{w}_t^* = \textbf{w}_{t-1} + a\boldsymbol{K_t}(z_t-y_t)  
\end{equation}
\begin{equation}\label{eq:expSig}
\boldsymbol{\Sigma_t} = \boldsymbol{\Sigma_{t-1}} + a(a-2)\boldsymbol{K}_t\boldsymbol{H}_t\boldsymbol{\Sigma_{t-1}}
\end{equation} 
where $\textbf{w}_t^*$ = $\boldsymbol{\mu}_t^*$ (see equation~\ref{eq:mean_update} in proof~\ref{app:proofExpForget} page~\pageref{app:proofExpForget}) and $\boldsymbol{H}_t$ is the derivative of the logit function. In our application $\boldsymbol{H}_t$ is a scalar, where:
\begin{equation}
\boldsymbol{H}_t = y_t(1-y_t)
\end{equation}
\end{theorem}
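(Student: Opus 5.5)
The starting point is the standard EKF measurement update for the Bayesian logistic model of equation~\ref{logisticRegression}. The prior on the weights is $\mathcal{N}(\textbf{w}_{t-1},\boldsymbol{\Sigma}_{t-1})$. We linearise $g(\textbf{w}^T\textbf{x}_t)$ about $\textbf{w}_{t-1}$, so the observation Jacobian is $\boldsymbol{H}_t = y_t(1-y_t)\textbf{x}_t^T$; in the paper's notation its scalar part is $y_t(1-y_t)$. The observation noise is taken to be the Bernoulli variance $y_t(1-y_t)$. With this, the unweighted EKF gives the gain $\boldsymbol{K}_t=\boldsymbol{\Sigma}_{t-1}\boldsymbol{H}_t^T(\boldsymbol{H}_t\boldsymbol{\Sigma}_{t-1}\boldsymbol{H}_t^T+R_t)^{-1}$, the mean update $\textbf{w}_t=\textbf{w}_{t-1}+\boldsymbol{K}_t(z_t-y_t)$, and the covariance update $\boldsymbol{\Sigma}_t=(\mathbf{I}-\boldsymbol{K}_t\boldsymbol{H}_t)\boldsymbol{\Sigma}_{t-1}$. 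Writing $s_t^2=\textbf{x}_t^T\boldsymbol{\Sigma}_{t-1}\textbf{x}_t$ and simplifying recovers the Lemma (EKF update), which is the case $a=1$.

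Next we introduce forgetting as a \emph{deliberately scaled gain}. Instead of the optimal gain we use the suboptimal linear estimator $\textbf{w}_t^*=\textbf{w}_{t-1}+a\boldsymbol{K}_t(z_t-y_t)$, which is equation~\ref{eq:expMu} by definition. The real content is then the covariance of this estimator. We compute it with the Joseph form, which holds for an arbitrary gain $\tilde{\boldsymbol{K}}=a\boldsymbol{K}_t$:
\begin{equation*}
\boldsymbol{\Sigma}_t=(\mathbf{I}-a\boldsymbol{K}_t\boldsymbol{H}_t)\boldsymbol{\Sigma}_{t-1}(\mathbf{I}-a\boldsymbol{K}_t\boldsymbol{H}_t)^T+a^2\boldsymbol{K}_tR_t\boldsymbol{K}_t^T .
\end{equation*}
This follows because the error $\textbf{w}-\textbf{w}_t^*=(\mathbf{I}-a\boldsymbol{K}_t\boldsymbol{H}_t)(\textbf{w}-\textbf{w}_{t-1})-a\boldsymbol{K}_t\epsilon_t$ is a sum of independent terms, one from the prior error and one from the linearised observation noise.

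The key algebraic step is to expand the Joseph form and use the defining identity of the optimal gain, $\boldsymbol{K}_t(\boldsymbol{H}_t\boldsymbol{\Sigma}_{t-1}\boldsymbol{H}_t^T+R_t)=\boldsymbol{\Sigma}_{t-1}\boldsymbol{H}_t^T$. The two quadratic terms in $a$ combine into $a^2\boldsymbol{K}_t(\boldsymbol{H}_t\boldsymbol{\Sigma}_{t-1}\boldsymbol{H}_t^T+R_t)\boldsymbol{K}_t^T$, which by that identity equals $a^2\boldsymbol{\Sigma}_{t-1}\boldsymbol{H}_t^T\boldsymbol{K}_t^T=a^2\boldsymbol{K}_t\boldsymbol{H}_t\boldsymbol{\Sigma}_{t-1}$. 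Here we use the symmetry of $\boldsymbol{K}_t\boldsymbol{H}_t\boldsymbol{\Sigma}_{t-1}=\boldsymbol{\Sigma}_{t-1}\boldsymbol{H}_t^T(\cdot)^{-1}\boldsymbol{H}_t\boldsymbol{\Sigma}_{t-1}$. The two linear cross terms give $-2a\boldsymbol{K}_t\boldsymbol{H}_t\boldsymbol{\Sigma}_{t-1}$, by the same symmetry. Adding these yields $\boldsymbol{\Sigma}_t=\boldsymbol{\Sigma}_{t-1}+a(a-2)\boldsymbol{K}_t\boldsymbol{H}_t\boldsymbol{\Sigma}_{t-1}$, which is equation~\ref{eq:expSig}. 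At $a=1$ this collapses to the standard update, giving the special case Lemma~\ref{lem:noDecay} cited earlier.

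I expect the main obstacle to be justifying the Joseph form in the EKF setting. The noise there is not Gaussian, and the linearisation point is $\textbf{w}_{t-1}$. I would argue at the level of the linearised Gaussian approximation, following MacKay's Laplace-type posterior. A second point needs checking: for $0<a<1$ the factor $a(a-2)$ lies in $(-1,0)$, so the covariance shrinks less than under the optimal update. This is the intended ``forgetting'' interpretation from Section~\ref{sub:exp}, namely larger retained uncertainty. It also guarantees that $\boldsymbol{\Sigma}_t$ stays positive semidefinite, since the Joseph form is a sum of PSD terms.
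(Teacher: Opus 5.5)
Your proposal is correct and takes essentially the same route as the paper's appendix proof. Both scale the optimal gain by $a$, write the error covariance of the resulting estimator in Joseph form, and collapse it using the gain identity $\boldsymbol{K}_t(\boldsymbol{H}_t\boldsymbol{\Sigma}_{t-1}\boldsymbol{H}_t^T+R_t)=\boldsymbol{\Sigma}_{t-1}\boldsymbol{H}_t^T$. The differences are cosmetic: the paper carries a process-noise covariance $\boldsymbol{Q}_t$ (via $\boldsymbol{\Sigma}_{t-1}^\ast=\boldsymbol{\Sigma}_{t-1}+\boldsymbol{Q}_t$) and sets $\boldsymbol{Q}_t\to 0$ only at the end, whereas you omit it from the start, and you state explicitly the symmetry $\boldsymbol{K}_t\boldsymbol{H}_t\boldsymbol{\Sigma}_{t-1}=\boldsymbol{\Sigma}_{t-1}\boldsymbol{H}_t^T\boldsymbol{K}_t^T$ that the paper uses tacitly in its last line.
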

\begin{myProof}
	 See Appendix~\ref{app:proofExpForget}
	  \hfill $\blacksquare$
\end{myProof}
The value of $a$ can be chosen dynamically, and when $a$ is chosen to be $>$ 2 the covariance matrix increases in value with the update and thus forgets previous observations and relies more heavily on the current data point.
\begin{myLemma}
	\label{lem:noDecay}
	For $a$=1, i.e., no forgetting, Eq.~(\ref{eq:expMu}-\ref{eq:expSig}) become
	Eq.~(\ref{eq:mu}-\ref{eq:sigma})
\end{myLemma}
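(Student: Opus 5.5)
The plan is to substitute $a=1$ into Theorem~\ref{thm:EKFdecay} and then write the Kalman gain $\boldsymbol{K}_t$ explicitly for the logistic observation model of Eq.~\eqref{logisticRegression}. With $a=1$ we get $a(a-2)=-1$. Eq.~\eqref{eq:expSig} then becomes the standard EKF covariance update
\begin{equation*}
\boldsymbol{\Sigma}_t=\boldsymbol{\Sigma}_{t-1}-\boldsymbol{K}_t\boldsymbol{H}_t\boldsymbol{\Sigma}_{t-1},
\end{equation*}
and Eq.~\eqref{eq:expMu} becomes $\textbf{w}_t=\textbf{w}_{t-1}+\boldsymbol{K}_t(z_t-y_t)$. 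It therefore remains to check that these two equations are Eq.~\eqref{eq:mu} and Eq.~\eqref{eq:sigma}.

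Next I would write the gain for the linearised observation. The observation is $z_t\approx g(\textbf{w}^T\textbf{x}_t)$, and its derivative with respect to $\textbf{w}$ is $y_t(1-y_t)\textbf{x}_t$. The observation-noise variance of a Bernoulli output linearised at $y_t$ is also $y_t(1-y_t)$. Following MacKay's approximation of the posterior used in the paper, with $s_t^2=\textbf{x}_t^T\boldsymbol{\Sigma}_{t-1}\textbf{x}_t$, the innovation variance is proportional to $1+y_t(1-y_t)s_t^2$. The gain is then
\begin{equation*}
\boldsymbol{K}_t=\frac{\boldsymbol{\Sigma}_{t-1}\textbf{x}_t}{1+y_t(1-y_t)s_t^2}.
\end{equation*}
This gives $\boldsymbol{K}_t\boldsymbol{H}_t\boldsymbol{\Sigma}_{t-1}=\frac{y_t(1-y_t)}{1+y_t(1-y_t)s_t^2}(\boldsymbol{\Sigma}_{t-1}\textbf{x}_t)(\boldsymbol{\Sigma}_{t-1}\textbf{x}_t)^T$, because $\boldsymbol{\Sigma}_{t-1}$ is symmetric. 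Substituting this into the covariance update yields Eq.~\eqref{eq:mu}.

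For the weights, I need to show that $\boldsymbol{\Sigma}_{t-1}\textbf{x}_t/(1+y_t(1-y_t)s_t^2)$ equals the factor $\boldsymbol{\Sigma}_t\textbf{x}_t/(1+y_t(1-y_t)s_t^2)$ that appears in Eq.~\eqref{eq:sigma}. I would use a Sherman--Morrison style identity: multiply the already-derived Eq.~\eqref{eq:mu} on the right by $\textbf{x}_t$. This gives
\begin{equation*}
\boldsymbol{\Sigma}_t\textbf{x}_t=\boldsymbol{\Sigma}_{t-1}\textbf{x}_t\left(1-\frac{y_t(1-y_t)s_t^2}{1+y_t(1-y_t)s_t^2}\right)=\frac{\boldsymbol{\Sigma}_{t-1}\textbf{x}_t}{1+y_t(1-y_t)s_t^2},
\end{equation*}
so $\boldsymbol{K}_t=\boldsymbol{\Sigma}_t\textbf{x}_t$. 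The remaining job is to reconcile this with the extra $1+y_t(1-y_t)s_t^2$ in the denominator of Eq.~\eqref{eq:sigma} by matching the normalisation conventions of the gain in Appendix~\ref{app:proofExpForget}, i.e.\ whether $\boldsymbol{\Sigma}$ there denotes the pre- or post-update covariance.

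I expect that bookkeeping to be the main obstacle. The lemma's displayed equations write $\textbf{x}_{t-1}$ in one place, reuse $s_t^2$, and place the normaliser in a way that depends on those conventions. My first step would be to fix the definitions of $\boldsymbol{K}_t$ and $s_t^2$ exactly as in the appendix derivation of Eq.~\eqref{eq:mean_update}. I would then verify that the two forms agree up to these notational identifications, treating the $t-1$ versus $t$ indices on $\textbf{x}$ as typographical.
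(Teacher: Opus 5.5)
Your proposal is correct and follows the paper's route: set $a=1$ so that $a(a-2)=-1$, then collapse the gain using the scalar innovation. It also makes explicit what the paper's one-line ``straightforward substitution'' hides, namely the row Jacobian $y_t(1-y_t)\textbf{x}_t^T$ and the choice $R_t=y_t(1-y_t)$, which together produce the denominator $1+y_t(1-y_t)s_t^2$.

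The bookkeeping you defer has a definite answer. The appendix builds $\boldsymbol{K}_t$ from the prior covariance ($\boldsymbol{\Sigma}^\ast_{t-1}\to\boldsymbol{\Sigma}_{t-1}$), so Eq.~(\ref{eq:sigma}) holds exactly when its $\boldsymbol{\Sigma}_t$ is read as $\boldsymbol{\Sigma}_{t-1}$; equivalently, $\textbf{w}_t=\textbf{w}_{t-1}+\boldsymbol{\Sigma}_t\textbf{x}_t(z_t-y_t)$ with no extra denominator. Your own Sherman--Morrison identity shows that the literal post-update reading gives $\boldsymbol{K}_t/\bigl(1+y_t(1-y_t)s_t^2\bigr)$ instead, so this is a typo in the statement rather than a convention to be matched.
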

\begin{myProof}
	Recovering equations~\ref{eq:expMu}-\ref{eq:expSig} is achieved from straightforward substitution and reducing the update equations to their compact from, taking advantage of the scalar output. \hfill $\blacksquare$
\end{myProof}

In support of our argument that the added weight to new observations increases the uncertainty in the covariance matrix we provide the following proof. Where we show that the new update in equation~\ref{eq:expSig} guarantees that the resulting $\Sigma_t$ is always larger then or equal to the update from equation~\ref{eq:sigma}.
\begin{myLemma}
	\label{lem:incCovMat}
	For a given new observation $x_t$ with label $z_t$, we have:
	\begin{equation}
	\Sigma_t = \Sigma_{t-1} + a(a-2) K_tH_t\Sigma_{t-1} \ge \Sigma_{t-1} - K_tH_t\Sigma_{t-1}
	\end{equation}
\end{myLemma}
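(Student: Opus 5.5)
The plan is to subtract the right-hand side from the left-hand side and show that the difference is positive semidefinite. The difference is
\begin{equation*}
\bigl(a(a-2)+1\bigr)K_tH_t\Sigma_{t-1} = (a-1)^2\, K_tH_t\Sigma_{t-1}.
\end{equation*}
The scalar factor $(a-1)^2$ is nonnegative for every real forgetting parameter $a$. It therefore suffices to prove $K_tH_t\Sigma_{t-1}\succeq 0$, with $\ge$ read in the Loewner order on symmetric matrices. Equality holds exactly when $a=1$, which is consistent with Lemma~\ref{lem:noDecay}.

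Next I would write the Kalman gain in its standard EKF form for scalar output:
\begin{equation*}
K_t=\frac{\Sigma_{t-1}H_t^T}{H_t\Sigma_{t-1}H_t^T+R_t}.
\end{equation*}
Here $H_t$ is the linearized observation row, proportional to $y_t(1-y_t)\,\textbf{x}_t^T$, and $R_t>0$ is the observation-noise variance. With this form,
\begin{equation*}
K_tH_t\Sigma_{t-1}=\frac{(\Sigma_{t-1}H_t^T)(\Sigma_{t-1}H_t^T)^T}{H_t\Sigma_{t-1}H_t^T+R_t}.
\end{equation*}
This is a rank-one outer product $vv^T$ divided by a positive scalar, so it is symmetric positive semidefinite. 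This is the same structure as the correction term in Eq.~(\ref{eq:mu}), where the denominator is $1+y_t(1-y_t)s_t^2>0$. The positivity of that denominator follows from $\Sigma_{t-1}\succeq 0$, which holds inductively because the prior covariance is initialized PSD and each update preserves PSD-ness for $a$ in the stable range. For the form stated in Theorem~\ref{thm:EKFdecay}, where $H_t$ is written as the scalar $y_t(1-y_t)\ge 0$, the same conclusion follows because $K_t$ carries the factor $\Sigma_{t-1}\textbf{x}_t$.

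The main obstacle is notational rather than mathematical. The paper calls $H_t$ a scalar, yet $K_tH_t\Sigma_{t-1}$ must be a $k\times k$ matrix for the update to make sense. I would resolve this first by fixing the vector form $H_t=y_t(1-y_t)\textbf{x}_t^T$, so that the product is well-defined and visibly of the form $vv^T/c$. Once that is settled, the inequality is just $(a-1)^2\ge 0$ multiplied by a PSD matrix. I would close by remarking that for $a>2$ the left side actually exceeds $\Sigma_{t-1}$ itself, which justifies the ``forgetting'' interpretation.
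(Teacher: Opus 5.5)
Your proposal is correct and follows the paper's route: the difference of the two sides is $(a-1)^2K_tH_t\Sigma_{t-1}$, so everything reduces to showing $K_tH_t\Sigma_{t-1}\succeq 0$. Your explicit rank-one form $(\Sigma_{t-1}H_t^T)(\Sigma_{t-1}H_t^T)^T/(H_t\Sigma_{t-1}H_t^T+R_t)$ actually proves that step, whereas the paper only asserts it via an eigendecomposition (and calls the matrix positive definite and invertible, which fails for a rank-one matrix when $k>1$).
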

\begin{myProof}
	See Appendix~\ref{app:increaseCovMatrix}
	\hfill $\blacksquare$
\end{myProof}

%% file: 041decisionSurfaceHyperbolic.tex
In this section we derive the optimal decision surface for taking actions on orders, a necessity for our adversarial importance weighting.
For the purpose of this discussion, 
we simplify the actual system, 
and we assume that we need to choose between only two options,
namely
to pass, or to investigate.
Then, the problem is informally defined as follows:
\bit
\item given several orders, and their (price, score) for each,
\item decide which ones are best to pass (vs. investigate)
\eit

Formally, we have

\begin{myProblem}[\decisionSurface]
	~
	\label{prob:surface}
	\begin{compactdesc}
		\item \textbf{Given}:
		\begin{compactitem}
			\item the (dollar) cost of a \falseAlarm, false dismissal, and investigation cost
			\item and a specific order with suspiciousness score $\score$ and value (i.e., selling price) $\val$
		\end{compactitem}
		\item 	\textbf{Find} the best decision we should do (accept vs. investigate), 
		and expected value of this order, under our best decision.
	\end{compactdesc}
\end{myProblem}

In effect, we have to combine the price $\val$ and the score $\score$
(= probability of being fraudulent), into a function $F(s,u)$
to make our decision - if $F()$ is below a threshold $\theta$, 
we should accept the transaction as ``honest''; otherwise we should
investigate:
\beq
\mathrm{decision} = \left\{
    \begin{array}{l}
    \mathrm{`accept'~ if ~} F(s,u) \leq \theta \\
   \mathrm{`investigate' ~otherwise} 
   \end{array}
   \right.
\eeq
How should we choose the blending function $F()$, 
and the threshold $\theta$?
The straightforward (but wrong) answer, is to estimate
the expected value: $\dsf() = \val * \score$:
if the price $\val$ is low, and the probability of fraud
$\score$ is also low, then pass.
However, this is wrong: An expensive item
will almost always get investigated, potentially delaying a high-dollar order from a legitimate customer. 

The solution we propose is to  derive the result 
with a cost-benefit analysis, from {\em first principles}.
 Thus, we propose 
to also take into
account
\bit
\item $\invCost$: the \$ cost of the human investigation 
\item $\pmargin$: the profit margin (say, $m$=0.1, for 10\% profit margin)
\item $\friction$: the cost of friction (\$ amount reduction in life-time value or a coupon for a falsely investigated customer)
\item $d$: the loss, from a false dismissal (usually $d = 1-m$ but can also be impacted by external events, i.e. credit card declined by bank)
\eit

Figure~\ref{fig:costBenefit} illustrates our setting
\begin{figure}[htbp]
	\centering
\begin{tabular}{|c|c|c|}
	\hline 
	   \diagbox{guess}{Reality}      & Honest  ~ (1-$\score$) & Fraud ~ ($\score$) \\ 	\hline 
		Honest	&$\val * \pmargin$  & $ - \val * d$  \\ 	\hline 
		Fraud	&  $- \invCost - \friction$ &  $- \invCost$ \\ 	\hline 
\end{tabular} 
\caption{ \label{fig:costBenefit} Cost Benefit Analysis. Profit, for each
of the four cases.}
\end{figure}

Let $\shMax$ be the maximum fraud-score we should tolerate,
and still 'pass' this order. Then,
we can show that the resulting decision surface is 
a hyperbola (See illustration in Figure~\ref{fig:gaming}). 
Next, we prove this Lemma:
\begin{myLemma}[Hyperbolic decision surface]
	\label{lemma:hyperbola}
	The decision surface is a \textit{hyperbola}. Specifically,
	with the parameters as described above,
	the order  (price $\val$, prob. fraud $\score$) should be passed 
	(ie., treated as 'honest'):
	\[
		\mathrm{pass~ if~} \score < \shMax
	\]

	 where
	\begin{eqnarray}
	\shMax & = & 1 - \frac{\val  * d  - \invCost}{\val  * (\pmargin + d) + \friction} 
	\label{eq:decisionHonest}
	\end{eqnarray}
	\end{myLemma}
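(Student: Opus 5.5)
The plan is to compute the expected profit of each decision from the cost-benefit table in Figure~\ref{fig:costBenefit}, and pass exactly when passing is at least as profitable as investigating. Write $\score$ for the fraud probability and $\val$ for the price.

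First, the expected profit of passing. With probability $1-\score$ the order is honest and earns $\val \pmargin$. With probability $\score$ it is fraud and loses $\val d$. So
\[ E_{\mathrm{pass}} = (1-\score)\,\val\pmargin - \score\,\val d. \]

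Second, the expected profit of investigating. We always pay $\invCost$, and we additionally pay $\friction$ when the customer is honest. So
\[ E_{\mathrm{inv}} = -(1-\score)(\invCost+\friction) - \score\,\invCost = -\invCost - (1-\score)\friction. \]

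The rule "pass iff $E_{\mathrm{pass}} > E_{\mathrm{inv}}$" is the Bayes-optimal decision. The expected value of the order under our best decision is then $\max(E_{\mathrm{pass}},E_{\mathrm{inv}})$, which also answers the second part of Problem~\ref{prob:surface}.

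The inequality is linear in $\score$, which is the key observation. Substituting $u = 1-\score$ and collecting terms gives
\[ u\,\bigl(\val(\pmargin+d)+\friction\bigr) > \val d - \invCost. \]
The coefficient $\val(\pmargin+d)+\friction$ is positive, since all costs and the margin are nonnegative and $\val>0$. So we may divide by it without flipping the inequality, obtaining $1-\score > (\val d-\invCost)/(\val(\pmargin+d)+\friction)$. This is exactly $\score<\shMax$ with $\shMax$ as in Eq.~(\ref{eq:decisionHonest}). This sign check is the only step needing care; ties at equality can go either way.

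Finally, we show the boundary $\score=\shMax(\val)$ is a hyperbola in the $(\val,\score)$ plane. Clear the denominator to get
\[ (1-\score)\bigl(\val(\pmargin+d)+\friction\bigr) = \val d - \invCost, \]
which expands to
\[ -(\pmargin+d)\,\val\score + \pmargin\,\val + \invCost + \friction - \friction\score = 0. \]
This is a conic whose only quadratic term is the cross term $\val\score$, with nonzero coefficient $-(\pmargin+d)$. Its discriminant is therefore $(\pmargin+d)^2>0$, so it is a hyperbola.

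Equivalently, $\shMax$ is a Möbius (linear-fractional) function of $\val$, with vertical asymptote $\val=-\friction/(\pmargin+d)$ and horizontal asymptote $\score = 1-d/(\pmargin+d) = \pmargin/(\pmargin+d)$. It is non-degenerate as long as $\val d-\invCost$ is not proportional to $\val(\pmargin+d)+\friction$, i.e. unless $d\friction + \invCost(\pmargin+d)=0$. Under positive costs that cannot happen, so I would state this as a mild standing assumption.
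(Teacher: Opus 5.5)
Your proof is correct and follows the same route as the paper: compute the expected profits of passing and of investigating from the cost--benefit table, then solve the resulting inequality, which is affine in $\score$, for $\score$. Your version is tighter in three places: you check that the divisor $\val(\pmargin+d)+\friction$ is positive; you use the correct investigation profit $-\invCost-(1-\score)\friction$ (the paper's displayed $\fprof$ carries an extra $-\score\invCost$ term, although its final formula agrees with yours); and you justify the hyperbola claim, which the paper only asserts, via the $\val\score$ cross term and the non-degeneracy condition $d\friction+\invCost(\pmargin+d)\neq 0$.
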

\begin{proof}
The idea is to find the expected profit for each of our decision,
and pick the highest. Thus
if we choose 'H' (honest), the expected profit $\hprof$ is
\[ \hprof = \val * \pmargin * (1-\score) + \score * (- \val) * d \]
That is,
with probability $\score$ we allow a fraudulent order,
and we lose its value ($- \val * d $);
and with probability (1-$\score$) we correctly allow the order,
for a profit of $\val * \pmargin$.
Similarly, the expected profit $\fprof$ when we decide 'fraud' is
\[ \fprof = -\invCost - \friction * (1-\score) ~ + ~  \score * (-\invCost)   \]
thus, we should decide to 'pass' (ie.,  'honest'), if
\[ 
\hprof > \fprof
\]
solving the inequality for $\score$, we complete the proof. See Appendix~\ref{app:proofOptimalDecSur} for the complete derivation. \hfill $\blacksquare$ 
\end{proof}

Notice that the resulting curve is a hyperbola in the 
($\score, \val$) plane.

We approximate this hyperbola as piece-wise linear to simplify the implementation in the production system. Among the many ways to fit our curve we chose a multi-objective meta-heuristic algorithm based on PSO~\cite{butler2012learning} due to a cheap objective function to compute and having two equally important objectives (\# investigated orders and fraud \$'s captured) to satisfy.

%% file: paretoOptimal.tex
\begin{myProblem}[\businessConstraints]
	~
	\label{prob:businessConstraints}
	\begin{compactdesc}
		\item \textbf{Given}:
		\begin{compactitem}
			\item the $m$ constraints of the business (like number of available human investigators)
			\item and the time-varying component of their thresholds
		\end{compactitem}
		\item 	\textbf{Find} an optimal set of solutions that satisfy the various thresholds of the $m$ business constraints.
	\end{compactdesc}
\end{myProblem}

We require a Pareto optimal set of solutions in business metric (BM) space. A solution on the Pareto frontier requires that one metric cannot be improved without the deterioration of another. Formally the Pareto set of solutions, P(Y), and a function, $f$, that maps a candidate \methodF model into BM space:
\begin{equation}
f: \mathbb{R}^n \rightarrow \mathbb{R}^m \nonumber
\end{equation}
where $X$ is a set of $n$ dimensional weight vectors found with the \methodF represented in the space $\mathbb{R}^n$, and $Y$ is the set of vectors in $\mathbb{R}^m$, our $m$ dimensional BM space, such that: 
\begin{equation}
Y = \{y \in \mathbb{R}^m: y = f(x), x \in X\}. \nonumber
\end{equation}
A point $y^*$ in $\mathbb{R}^m$ strictly dominates $y'$, represented as $y^*$ $\prec$ $y'$, when $y^*$ outperforms $y'$ in all $m$ dimensions. In figure~\ref{fig:gaming} (panel B) a non-dominated solution is shown as a point that it's NE quadrant (shown in light blue) is empty. Thus, the Pareto set is represented as:
\begin{equation}
P(Y) = \{y^* \in Y: \{y' \in Y : y^*\prec y',y^* \neq y'\} = \emptyset\}. \nonumber
\end{equation}

In our setting, the BM space can be represented as: (1) fraud dollars captured per investigation and (2) count of investigations performed.
The set of solutions is generated by using a grid search of the hyper-parameters ($\alpha$ and $\beta$) from equation~\ref{eq:border}, along with 
a varying ratio of positive-to-negative examples. 
In figure~\ref{fig:pareto} we plot model performance results after updating using the grid search represented in BM space, where the practitioner can choose a model that satisfies the current constraints. The models on the Pareto front in figure~\ref{fig:pareto} are indicated with a green 'X', dominated solutions are indicated with a red square.
\begin{figure}
	\centering
		\includegraphics[width=0.75\linewidth]{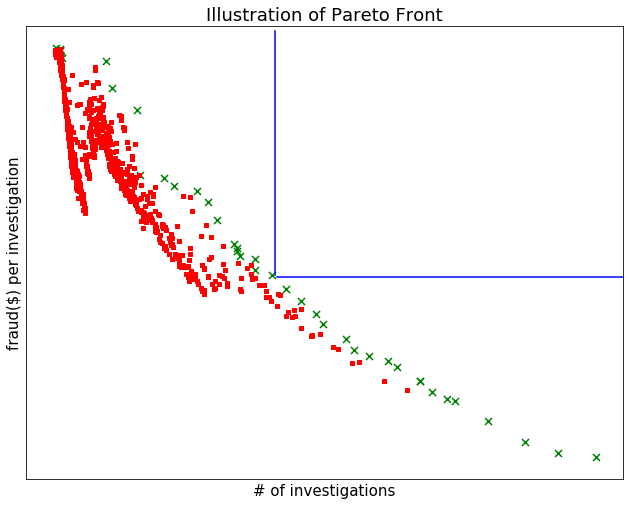}
	\caption[Pareto front illustration]
	{ \underline{\bf  \businessConstraints and Pareto Front.}
		 \method precomputes the Pareto front, in preparation of
		 changes in business requirements.
		 Scatter plot of model performance in business metric space (\# of investigations, vs  \$ fraud captured per investigation). Every point is a possible parameter choice of \method.
		 Points with empty
		  light blue NE quadrants are \textit{non-dominating} and thus form the Pareto front
		  (green 'x' points).
	\label{fig:pareto}}
\end{figure}

%% file: 060results.tex
\method has been implemented within a larger, live fraud prevention system.
In Figure~\ref{fig:timeplot}  (page~\pageref{fig:timeplot})
we provided a comparison (data points were smoothed via a moving average) of the missed fraud within the sub-system FraudFox was employed in,
with or without it.  
With the introduction of \method (dashed-black vertical line), there is a visible reduction in fraud losses.

\paragraph{Sanity-check: Adapting to shock}
We also present a synthetic example of a one-time change, see Figure~\ref{fig:simulation}.
In this case, \method adapts to the change, as expected.
In more detail,
figure~\ref{fig:simulation} shows a synthetic dataset with $k$=2 oracles, 
one near-perfect and the other near-random (and thus useless).
 At time $t$=30, there is a 'shock': the performance of the two
oracles swapped. Notice that the prediction loss surged immediately as the pattern changed. 
We show three different strategies: (1) {\em no-adaptation}: baseline - we keep the original
model ($\vecw$), unchanged; (2) {\em \method} (without exponential forgetting - section~\ref{sec:EKF};
and 
(3) {\method-exp}, with exponential forgetting of the older data points - subsection~\ref{sub:exp}.
The results in the Figure are exactly as expected:
 the loss jumps for  all methods at the moment of the 'shock',
 but  with {\em no-adaptation}, the loss stays high as expected;
while both versions of  \method adapt eventually, with the exponentially-forgetting one,
adapting faster.

\begin{figure}[tbp]
	\centering
	\includegraphics[width=0.95\linewidth]{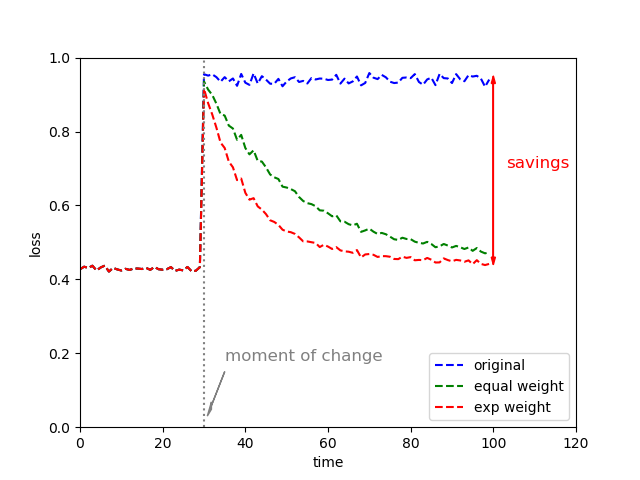}
	\caption[simulationI]{\underline{\method Adapts}: Synthetic Example. As expected,
		both our versions (equal-weights, in 'green' and  
		exponentially-decaying weights, in 'red') responds to a change/shock at $t$=30,
	    and eventually recover from the shock. Without adaptation ('blue'), 
        the behavior suffers.}
	\label{fig:simulation}
\end{figure}

%% file: 080conclusion.tex
In this work we have proposed a novel method (\method) 
that solves all the sub-problems listed in the introduction (see page~\pageref{sec:introduction}). \method automatically adapts to a dynamic environment and provides Pareto optimal alternatives to time-varying business constraints. 
Finally, \method is deployed in production and provides a solution, which is more resilient to abrupt changes in fraud behavior, while better balancing fraud losses and investigation costs.

In short, our main contributions are the following:

\begin{compactitem}
	
	\item \myemph{\automatic}: \method adapts to both adversarial and non-stationary observations
	(See Theorem~\ref{thm:EKFdecay} and Section~\ref{sec:paretoOptimal})
	\item \myemph{\principled}: 
	we derived from first principles
	an optimal decision surface from investigating orders based on a cost-benefit framework 
	(see Lemma~\ref{lemma:hyperbola})
	\item \myemph{\scalable}: \method is linear on the input size for parameter updates,
	and takes below a second, for decisions on the order-level.
	\item \myemph{\effective}:  \method is already in production,
	with  visible benefits.
	
\end{compactitem}

%% file: 091appendix_proofs.tex
\section{Appendix: Proof of Exponential Forgetting Formulae}
\label{app:proofExpForget}

Here we give the proof of Theorem~\ref{thm:EKFdecay} on page~\pageref{thm:EKFdecay}. We repeat the theorem, for convenience.

\noindent\textbf{Theorem 1[EKF with forgetting]}
With forgetting parameter of $a$, the update equations are:
\begin{equation}
\textbf{w}_t^* = \textbf{w}_{t-1} + a\boldsymbol{K_t}(z_t-y_t)  
\end{equation}
\begin{equation}
\boldsymbol{\Sigma_t} = \boldsymbol{\Sigma_{t-1}} + a(a-2)\boldsymbol{K}_t\boldsymbol{H}_t\boldsymbol{\Sigma_{t-1}}
\end{equation} 

\noindent\textbf{Proof} Let's first set up some notations.
\begin{eqnarray}
\boldsymbol{w}_t & = & \boldsymbol{w}_{t-1} + \boldsymbol{\epsilon}_t \\
z_t & =  &h(\boldsymbol{w}_t,\boldsymbol{x}_t) + \nu_t
\end{eqnarray}
The underlying assumption is $E(\boldsymbol{\epsilon}_t)=\boldsymbol{0}$, $E(\nu_t)=\boldsymbol{0}$, and $E(\boldsymbol{\epsilon}_t^T\nu_t)=0$. Furthermore, we denote $var(\boldsymbol{\epsilon}_t)=\boldsymbol{Q}_t$ and $var(\nu_t)=R_t$.\\

\noindent From regular EKF, we know that the update function of the mean is:
\begin{equation}
\boldsymbol{\mu}_t = \boldsymbol{\mu}_{t-1} + \boldsymbol{K_t}(z_t - y_t),
\end{equation}
where $\boldsymbol{K_t}$ is called Kalman Gain with $\boldsymbol{K_t} = \boldsymbol{\Sigma_{t-1}^\ast} \boldsymbol{H_t}(\boldsymbol{H_t}\boldsymbol{\Sigma_{t-1}^\ast} \boldsymbol{H_t^T} + R_t)^{-1}$, $\boldsymbol{H_t}$ is the partial derivative of $h(\boldsymbol{w}_t,\boldsymbol{x}_t)$ w.r.t. $\boldsymbol{w}_t$, and $y_t = h(\boldsymbol{\mu}_{t-1},\boldsymbol{x}_t)$.\\

\noindent Now, with the new proposed mean update function:
\begin{equation}\label{eq:mean_update}
\boldsymbol{\mu}_t^* = \boldsymbol{\mu}_{t-1} + a\boldsymbol{K_t}(z_t - y_t)
\end{equation}
we have the new error term as
\begin{eqnarray}
\tilde{\boldsymbol{e}}_t &= & \boldsymbol{w}_t - \boldsymbol{\mu}_t^\ast \nonumber\\
& = & \boldsymbol{w}_{t-1} + \boldsymbol{\epsilon}_t - (\boldsymbol{\mu}_{t-1} + a\boldsymbol{K_t}(z_t - y_t)) \nonumber\\
& = & (\boldsymbol{w}_{t-1} - \boldsymbol{\mu}_{t-1}) + \boldsymbol{\epsilon}_t - a\boldsymbol{K_t}(h(\boldsymbol{w}_t) + \nu_t - h(\boldsymbol{\mu}_{t-1})) \nonumber\\
& = & \boldsymbol{e}_{t-1} + \boldsymbol{\epsilon}_t - a\boldsymbol{K}_t\boldsymbol{H}_t(\boldsymbol{w}_t - \boldsymbol{\mu}_{t-1}) - a\boldsymbol{K}_t\nu_t \nonumber\\
& = & \boldsymbol{e}_{t-1} + \boldsymbol{\epsilon}_t - a\boldsymbol{K}_t\boldsymbol{H}_t(\boldsymbol{w}_t - \boldsymbol{w}_{t-1} + \boldsymbol{w}_{t-1} - \boldsymbol{\mu}_{t-1}) - a\boldsymbol{K}_t\nu_t \nonumber\\
& = & \boldsymbol{e}_{t-1} + \boldsymbol{\epsilon}_t - a\boldsymbol{K}_t\boldsymbol{H}_t(\boldsymbol{e}_{t-1} + \boldsymbol{\epsilon}_t) - a\boldsymbol{K}_t\nu_t \nonumber\\
& = & (\boldsymbol{I} - a\boldsymbol{K}_t\boldsymbol{H}_t)\boldsymbol{e}_{t-1} + (\boldsymbol{I} - a\boldsymbol{K}_t\boldsymbol{H}_t)\boldsymbol{\epsilon}_t - a\boldsymbol{K}_t\nu_t.\nonumber
\end{eqnarray}

\noindent Therefore,
\begin{eqnarray}
\boldsymbol{\Sigma}_t & = & E(\tilde{\boldsymbol{e}}_t\tilde{\boldsymbol{e}}_t^T) \nonumber\\
\nonumber
& =&  (\boldsymbol{I} - a\boldsymbol{K}_t\boldsymbol{H}_t)\boldsymbol{\Sigma}_{t-1}(\boldsymbol{I} - a\boldsymbol{K}_t\boldsymbol{H}_t)^T + \nonumber\\
& & ~~~ (\boldsymbol{I} - a\boldsymbol{K}_t\boldsymbol{H}_t)\boldsymbol{Q}_t(\boldsymbol{I} - a\boldsymbol{K}_t\boldsymbol{H}_t)^T + 
 a^2\boldsymbol{K}_tR_t\boldsymbol{K}_t^T \nonumber\\ 
& = & (\boldsymbol{I} - a\boldsymbol{K}_t\boldsymbol{H}_t)\boldsymbol{\Sigma}_{t-1}^\ast (\boldsymbol{I} - a\boldsymbol{K}_t\boldsymbol{H}_t)^T + a^2\boldsymbol{K}_tR_t\boldsymbol{K}_t^T \nonumber\\
\nonumber
& = & \boldsymbol{\Sigma}_{t-1}^\ast - a\boldsymbol{K}_t\boldsymbol{H}_t\boldsymbol{\Sigma}_{t-1}^\ast - a\boldsymbol{\Sigma}_{t-1}^\ast \boldsymbol{H}_t^T\boldsymbol{K}_t^T + \nonumber\\
& &  ~~~ a^2\boldsymbol{K}_t\boldsymbol{H}_t\boldsymbol{\Sigma}_{t-1}^\ast \boldsymbol{H}_t^T\boldsymbol{K}_t^T + a^2\boldsymbol{K}_tR_t\boldsymbol{K}_t^T \nonumber
\end{eqnarray}
where $\boldsymbol{\Sigma}_{t-1}^\ast = \boldsymbol{\Sigma}_{t-1} + \boldsymbol{Q}_t$. \nonumber\\

\noindent We use the same $\boldsymbol{K}_t$ as in the regular EKF and adapt that into the above equation to get:
\begin{eqnarray}
\boldsymbol{\Sigma}_t & = (\boldsymbol{I} - a\boldsymbol{K}_t\boldsymbol{H}_t)\boldsymbol{\Sigma}_{t-1}^\ast - a(\boldsymbol{I} - a\boldsymbol{K}_t\boldsymbol{H}_t)\boldsymbol{\Sigma}_{t-1}^\ast \boldsymbol{H}_t^T\boldsymbol{K}_t^T + a^2\boldsymbol{K}_tR_t\boldsymbol{K}_t^T \nonumber\\
& = (\boldsymbol{I} - a\boldsymbol{K}_t\boldsymbol{H}_t)\boldsymbol{\Sigma}_{t-1}^\ast - [\boldsymbol{\Sigma}_{t-1}^\ast \boldsymbol{H}_t^T - a\boldsymbol{K}_t(\boldsymbol{H}_t\boldsymbol{\Sigma}_{t-1}^\ast \boldsymbol{H}_t^T + R_t)]a\boldsymbol{K}_t^T \nonumber
\end{eqnarray}

Note that, $\boldsymbol{K}_t(\boldsymbol{H}_t\boldsymbol{\Sigma}_{t-1}^\ast \boldsymbol{H}_t^T + R_t) = \boldsymbol{\Sigma}_{t-1}^\ast \boldsymbol{H}_t$. Therefore,
\begin{eqnarray}
\boldsymbol{\Sigma}_t & = & (\boldsymbol{I} - a\boldsymbol{K}_t\boldsymbol{H}_t)\boldsymbol{\Sigma}_{t-1}^\ast - [\boldsymbol{\Sigma}_{t-1}^\ast \boldsymbol{H}_t^T - a\boldsymbol{\Sigma}_{t-1}^\ast \boldsymbol{H}_t]a\boldsymbol{K}_t^T \nonumber\\
& = & \boldsymbol{\Sigma}_{t-1}^\ast - a\boldsymbol{K}_t\boldsymbol{H}_t\boldsymbol{\Sigma}_{t-1}^\ast - a\boldsymbol{\Sigma}_{t-1}^\ast \boldsymbol{H}_t^T\boldsymbol{K}_t^T + a^2\boldsymbol{\Sigma}_{t-1}^\ast \boldsymbol{H}_t\boldsymbol{K}_t^T \nonumber\\
& = & \boldsymbol{\Sigma}_{t-1}^\ast + a(a-2)\boldsymbol{K}_t\boldsymbol{H}_t\boldsymbol{\Sigma}_{t-1}^\ast. \nonumber
\end{eqnarray}

\noindent If we let $\boldsymbol{Q} \rightarrow 0$, then $\boldsymbol{\Sigma}_{t-1}^\ast \rightarrow \boldsymbol{\Sigma}_{t-1}$ and we have $\boldsymbol{\Sigma}_t = \boldsymbol{\Sigma}_{t-1} + a(a-2)\boldsymbol{K}_t\boldsymbol{H}_t\boldsymbol{\Sigma}_{t-1}$.
 \hfill $\blacksquare$

%% file: increaseCovMatrix.tex
\section{Appendix: Proof of Larger Covariance Matrix}

In support of our argument that the added weight to new observations increases the uncertainty in the covariance matrix we provide the following proof. Where we show that the new update in equation~\ref{eq:expSig} guarantees that the resulting $\Sigma_t$ is always larger then or equal the update from equation~\ref{eq:sigma}.

\noindent\textbf{Lemma 3[Increase in Covariance Matrix]}

For $\Sigma_t$ = $\Sigma_{t-1}$ + a(a-2) $K_tH_t\Sigma_{t-1}$ $\ge$ $\Sigma_{t-1} - K_tH_t\Sigma_{t-1}$

\noindent\textbf{Proof} 

The equality above holds when a = 1. Note, the right-hand side is the covariance update of the original EKF (Extended Kalman Filter).
To see this,
\begin{eqnarray}
&&\Sigma_{t-1}+a(a-2) K_t H_t \Sigma_{t-1}-( \Sigma_{t-1}- K_t H_t \Sigma_{t-1} ) \nonumber\\
&=& (a^2-2a+1)K_t H_t \Sigma_{t-1} \nonumber\\ 
&=& (a-1)^2 K_t H_t \Sigma_{t-1} \label{aInc}
\end{eqnarray}

Since $K_tH_t\Sigma_{t-1}$ is positive definite and symmetric, we can multiply the above by 
($K_tH_t\Sigma_{t-1})^{-1}$ and the formula is reduced to $(a-1)^2 \ge $0, where the equality holds when a=1.

More rigorously, we can prove the invertibility in the following: 
Let M= $K_tH_t\Sigma_{t-1}$. Since $M$ is symmetric, we decompose $M$ as $QDQ^T$. Here, $D$ is a diagonal matrix of eigenvalues of $M$ and $Q$ is an orthogonal matrix, where rows of $Q$ are eigenvectors of $M$. Hence, equation~\ref{aInc} is equal to $(a-1)^2$ $QDQ^T$.

Since $Q$ is by definition invertible, we can multiply $Q^T$ to the left and $(Q^{T)^{-1}}$ to the right and reduce it further to $(a-1)^2 D$.

Now $D$ contains the eigenvalues from a symmetric covariance matrix and therefore all eigenvalues are $\ge$ 0, thus $(a-1)^2$ $D$$>$0 for any $a$$>$1. The canonical form with $a$=1 will always lead to min variance for a given update.
 \hfill $\blacksquare$

%% file: 002appendix_proof_decisionSurface.tex
\section{Appendix: Proof of Optimal Decision Surface for Guessing Fraud}
\label{app:proofOptimalDecSur}

Here we provide the full proof of Lemma~\ref{lemma:hyperbola} on page~\pageref{lemma:hyperbola}. We repeat the lemma, for convenience, as well as the definitions: 
\bit
\item $\invCost$: the \$ cost of the human investigation 
\item $\pmargin$: the profit margin (say, $m$=0.1, for 10\% profit margin)
\item $\friction$: the cost of friction (\$ amount reduction in life-time value or a coupon for a falsely investigated customer)
\item $d$: the loss, from a false dismissal (usually $d = 1-m$ but can also be impacted by external events, i.e. credit card declined by bank)
     
\eit

\noindent\textbf{Lemma 4 [Optimal Decision Surface]}
	The decision surface is a \textit{hyperbola}. Specifically,
	with the parameters as described above,
	the order  (price $\val$, prob. fraud $\score$) should be passed 
	(ie., treated as 'honest'):
	\[
		\mathrm{pass~ if~} \score < \shMax
	\]

	 where
	\begin{eqnarray}
	\shMax & = & 1 - \frac{\val  * d  - \invCost}{\val  * (\pmargin + d) + \friction} 
	\label{eq:decisionHonest}
	\end{eqnarray}

\noindent\textbf{Proof} 

The idea is to find the expected profit for each of our decision,
and pick the highest. Thus
if we choose 'H' (honest), the expected profit $\hprof$ is
\[ \hprof = \val * \pmargin * (1-\score) + \score * (- \val) * d \]
That is,
with probability $\score$ we allow a fraudulent order,
and we lose its value ($- \val * d $);
and with probability (1-$\score$) we correctly allow the order,
for a profit of $\val * \pmargin$.
Similarly, the expected profit $\fprof$ when we decide 'fraud' is
\[ \fprof = -\invCost - \friction * (1-\score) ~ + ~  \score * (-\invCost)   \]
Thus, we should decide to 'pass' (ie.,  'honest'), if
\[ 
\hprof > \fprof
\]
Now we solve for the inequality, where expanding out both sides we have:
\begin{eqnarray}
\nonumber
-[(1-s)(\invCost+f)+s*\invCost] &>& (1-s) * \val*\pmargin-s*\val*d \nonumber 
\end{eqnarray}
Solving for $s$, we have:
\begin{eqnarray}
\frac{(\invCost+f+\val*\pmargin)}{(\val*d+\val*\pmargin+ f)} & > & s\nonumber 
\end{eqnarray}
And with one final rearrangement
\begin{eqnarray}
s < 1 - \frac{\val  * d  - \invCost}{\val  * (\pmargin + d) + \friction} = \shMax \nonumber 
\end{eqnarray}
we arrive at equation~\ref{eq:decisionHonest}.
 \hfill $\blacksquare$